\pdfoutput=1
\documentclass[11pt]{article}

\usepackage[T1]{fontenc}
\usepackage[utf8]{inputenc}
\usepackage{lmodern}
\usepackage[letterpaper,margin=1in]{geometry}

\usepackage{amsmath}
\usepackage{amssymb}
\usepackage{amsthm}
\usepackage{graphicx}
\usepackage{booktabs}
\usepackage{longtable}
\usepackage{array}
\usepackage[labelfont=bf,font=small]{caption}
\usepackage{microtype}
\usepackage[hidelinks,breaklinks=true]{hyperref}
\usepackage{url}

\hypersetup{
  pdftitle={Compute-Bounded Security Assurance: Coverage, Verification, and
            Response under Resource Constraints},
  pdfsubject={Theoretical framework for defensive security assurance},
  pdfauthor={jithinvg@bud.studio; dittops@bud.studio; Bud Ecosystem}
}

\theoremstyle{plain}
\newtheorem{proposition}{Proposition}

\newcommand{\E}{\mathbb{E}}
\newcommand{\Prob}{\mathbb{P}}
\newcommand{\ind}{\mathbf{1}}
\DeclareMathOperator{\Var}{Var}
\DeclareMathOperator{\Corr}{Corr}
\newcommand{\iidsim}{\stackrel{\mathrm{iid}}{\sim}}
\newcommand{\eff}{\mathrm{eff}}
\newcommand{\KV}{\mathrm{KV}}
\newcommand{\TV}{\mathrm{TV}}
\newcommand{\KL}{\mathrm{KL}}

\newcolumntype{L}[1]{>{\raggedright\arraybackslash}p{#1}}

\title{%
  \LARGE Compute-Bounded Security Assurance\\[0.45em]
  \large Coverage, Verification, and Response under Resource Constraints}
\author{%
  \texttt{jithinvg@bud.studio}\quad\texttt{dittops@bud.studio}\\
  Bud Ecosystem}
\date{September 2026}

\begin{document}
\maketitle

\begin{abstract}
Additional inference compute can increase the number of correctly resolved
security-assurance tasks, but repeated success, unique coverage, accepted
evidence, and operational protection are different quantities. We develop a
resource-constrained framework that separates them. For repeated conditionally
independent attempts with latent success probability $\Theta$, coverage is
$C_n = 1 - \E[(1-\Theta)^n]$, and its limiting value is
$1 - \Prob(\Theta = 0)$. Positive pairwise outcome correlation does not by
itself imply a ceiling below one: we construct two models with the same mean
success and pairwise correlation but different limiting coverage. We
distinguish this result from the effective sample size used to estimate a
mean, and show why finite-budget observations cannot generally identify an
asymptotic support ceiling. We then connect coverage to fallible evidence
checking, proper scoring of factual grounding, complete resource accounting,
service capacity, and a response model that includes mitigation delay. A
conceptual defensive architecture separates evidence analysis, adjudication,
and operational authority. An evaluation protocol specifies held-out tasks,
paired comparisons, negative cases, and uncertainty reporting. The
contribution is a consistent theoretical synthesis and a set of
counterexamples to invalid extrapolations, rather than an empirical scaling
law. All numerical illustrations are analytic; no model-parity result,
hardware benchmark, or general attacker--defender equilibrium is claimed.
\end{abstract}

\section{Introduction}

Language-model systems can assist with source interpretation, specification
review, configuration analysis, and remediation assessment. Their utility
depends on the evidence made available, the correctness of their conclusions,
and the time and cost required to validate and use those conclusions. More
generated text is therefore an ambiguous measure of security progress. A
system can produce more reports while resolving fewer distinct obligations,
burdening reviewers, or missing a deployment deadline.

Evidence from repeated-sampling studies shows that additional inference can
improve task coverage under particular models, benchmarks, and selection
procedures \cite{brown2024monkeys,snell2024testtime,wu2024inference}.
Security-specific demonstrations and the DARPA AI Cyber Challenge establish
that automated analysis and repair can be useful in realistic software
settings \cite{darpa2025aixcc}. Those observations motivate a
resource-allocation question. They do not establish a universal relationship
between FLOPs and security, or between pairwise correlation and the attainable
fraction of tasks.

This paper develops a framework for \emph{defensive security assurance}: the
production and validation of evidence about specified properties of authorized
systems. Its units are explicit obligations and adjudicated outcomes. We
analyze repeated resolution attempts as abstract random trials; the paper does
not specify an autonomous exploitation procedure. An external harmful-event
process appears only to quantify the timing requirements of defensive
response.

\paragraph{Contributions.}
First, we separate unverified obligations from actual violations and state the
assumptions needed for any residual-mass claim. Second, we derive a coverage
model with an explicit support ceiling and demonstrate that mean success and
pairwise correlation do not determine that ceiling. Third, we distinguish
correct outcomes from accepted reports and give error-aware accounting for
repeated verification. Fourth, we connect complete budget and deadline
constraints to service capacity and response latency. Finally, we provide a
conceptual defensive architecture and a statistical evaluation protocol
consistent with these distinctions.

The probability identities, submodular coverage results, and
constrained-optimization principles used here have established foundations.
Our contribution is their integration into one assurance model, together with
explicit counterexamples that prevent unjustified extrapolation. We make no
priority claim for these underlying mathematical tools and report no new
empirical performance results.

\section{Scope and related foundations}

Formal verification establishes properties relative to a model and its
assumptions. The seL4 project illustrates that substantial guarantees can be
proved for concrete systems; its documented assumptions also illustrate why
deployment boundaries must be stated precisely
\cite{sel4verification,sel4assumptions}. Rice's theorem concerns uniform
decision procedures for nontrivial extensional properties of partial
computable functions \cite{rice1953}. It neither proves that a particular
program is unsafe nor rules out broad classes of machine-checkable proofs.

Attack-graph analysis provides a representation of security-relevant
dependencies under a specified network model \cite{singhal2011attackgraphs}.
Such representations are useful for organizing obligations, but their
existence does not imply that every additional interface increases reachable
behavior or creates a violation. Empirical reports on Android's memory-safety
improvements support prevention of particular defect classes
\cite{vanderstoep2024memorysafety}. Their denominators are reported
vulnerabilities and code-related measures, not a uniform probability measure
over all reachable states.

Repeated-sampling work studies the fraction of benchmark problems solved by at
least one sample \cite{brown2024monkeys}. Test-time allocation and
inference-scaling studies demonstrate workload-specific tradeoffs between
model choice and additional inference
\cite{snell2024testtime,wu2024inference}. These results motivate measuring
complete budget curves. They do not imply that every smaller model can match a
larger one, or that a shared prompt makes independently sampled outputs
statistically dependent conditional on the prompt. Conditional and
population-level dependence must be distinguished.

Expected union coverage is a classical monotone submodular objective
\cite{nemhauser1978}. Adaptive selection requires additional conditions, such
as adaptive submodularity, to inherit comparable guarantees
\cite{golovin2011adaptive}. Determinantal point processes describe repulsive
subset sampling \cite{kulesza2012dpp}; diversity under a chosen representation
is not automatically diversity of correct outcomes.

Inference systems supply useful mechanisms for improving service efficiency.
DistServe separates prefill and decoding to manage interference; SGLang
enables reuse across structured model programs; PagedAttention manages
attention-cache memory; and Mooncake organizes serving around distributed
cache state
\cite{zhong2024distserve,zheng2023sglang,kwon2023pagedattention,qin2024mooncake}.
These are component results. A security-assurance system must additionally
account for evidence processing, adjudication, failed work, and response
delay. We do not transfer reported component speedups into an assumed
end-to-end improvement.

Proper scoring rules provide a basis for evaluating factual probabilistic
predictions \cite{gneiting2007scoring}. Classical constrained optimization
gives allocation conditions under stated regularity and convexity assumptions
\cite{boyd2004convex}. Security-investment models likewise depend on
particular loss and response functions \cite{gordon2002economics}; their
investment bounds cannot be moved unchanged between different decision
problems.

\section{Obligations, residual uncertainty, and violations}

\subsection{A bounded and explicit object of assurance}

Fix a system version, configuration, property specification, environmental
assumptions, and evaluation horizon. Let $\Omega$ be a finite collection of
assurance obligations. An obligation may concern a state invariant, a bounded
trace, or a specified relation between executions. These are different
semantic objects; a property involving multiple executions must be represented
as such, rather than silently reduced to a predicate on one state.

Assign each obligation $j$ a fixed nonnegative weight $w_j$, and write
$\mu(A) = \sum_{j \in A} w_j$ for $A \subseteq \Omega$. For descriptive
coverage, weights may sum to one. For workload accounting they may be
unnormalized. A risk interpretation requires additional assumptions about
event probabilities, consequences, and overlap; arbitrary severity scores do
not supply those assumptions.

Let $V \subseteq \Omega$ be the obligations that are actually violated. Let
$P \subseteq \Omega \setminus V$ be those proved satisfied under the declared
assumptions. Then
\begin{equation}
  U = \Omega \setminus P, \qquad V \subseteq U, \qquad
  \mu(U) = \mu(\Omega) - \mu(P).
\end{equation}
The word \emph{proved} matters. An inconclusive test does not add an
obligation to $P$. Nor does a violation enter $P$ merely because it has been
discovered. Discovery changes knowledge; successful remediation changes the
system and hence the obligation instance being evaluated. Since
$P \cap V = \varnothing$, $U \cap V = V$ for a fixed system under a sound proof
model.

\subsection{What composition does and does not imply}

If $n$ finite-state components have local state counts $s_1, \dots, s_n$,
their composed reachable state set $R$ satisfies
\begin{equation}
  |R| \le \prod_{i=1}^{n} s_i .
\end{equation}
This is an upper bound. A family of components with two syntactic states but
only one reachable state each has $|R| = 1$ even though the product bound is
$2^n$. A synchronization constraint can also restrict a product space to a
small diagonal. Adding a restrictive component can reduce reachable behavior.
No lower bound, monotonicity statement, or violation-density claim follows from
equation~(2) alone.

Similarly, there is no general sublinear law relating proof-covered state count
to verification budget. A short inductive invariant may establish a property
for a very large or infinite family of states. Explicit enumeration and
symbolic or compositional proofs have different cost structures. A counting
identity is not a complexity lower bound.

For a sequence of systems with comparable unnormalized measures, an actual
growth result requires actual growth assumptions. For example, if
$\mu(\Omega_n) \to \infty$ and $\mu(P_n) = o(\mu(\Omega_n))$, then
equation~(1) implies $\mu(U_n) \to \infty$. The conclusion is conditional on
both premises. It does not apply to a normalized measure, which never exceeds
one, and a cross-version comparison requires a declared mapping of obligations.

\subsection{A legitimate conditional density statement}

\begin{proposition}[Conditional residual expectation]
Let $\mathcal{E}$ denote the evidence available after a sound assessment, let
$U$ and its nonnegative weights be $\mathcal{E}$-measurable, and let
$Z_j = \ind\{j \in V\}$. If, almost surely,
$\Prob(Z_j = 1 \mid \mathcal{E}) \ge \delta$ for every $j \in U$, then
\begin{equation}
  \E\!\left[\sum_{j \in U} w_j Z_j \;\middle|\; \mathcal{E}\right]
  \;\ge\; \delta\,\mu(U).
\end{equation}
\end{proposition}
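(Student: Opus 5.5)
The plan is to rewrite the random sum over $U$ as a sum over the fixed finite set $\Omega$ with measurable indicator coefficients, then pull those coefficients out of the conditional expectation. Concretely, I will write
\[
  \sum_{j \in U} w_j Z_j \;=\; \sum_{j \in \Omega} w_j \ind\{j \in U\}\, Z_j .
\]
Since $\Omega$ is a fixed finite collection, this is a finite sum of nonnegative random variables. By hypothesis, $U$ and the weights are $\mathcal{E}$-measurable, so for each $j$ the coefficient $a_j := w_j \ind\{j \in U\}$ is a nonnegative $\mathcal{E}$-measurable random variable.

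The argument then runs in three steps, each using only standard properties of conditional expectation.

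\textbf{Step 1.} Linearity of conditional expectation over the finite index set gives
\[
  \E\bigl[\textstyle\sum_{j} a_j Z_j \mid \mathcal{E}\bigr] = \sum_j \E[a_j Z_j \mid \mathcal{E}].
\]
Each term is well defined because everything is nonnegative, so no integrability issues arise. If the $w_j$ are unbounded, I would use the nonnegative extended-valued version of conditional expectation.

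\textbf{Step 2.} Taking out what is known gives
\[
  \E[a_j Z_j \mid \mathcal{E}] = a_j\,\Prob(Z_j = 1 \mid \mathcal{E}) \quad \text{a.s.}
\]

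\textbf{Step 3.} On the event $\{j \in U\}$, the hypothesis gives $\Prob(Z_j = 1 \mid \mathcal{E}) \ge \delta$ a.s. On the complement, $a_j = 0$. In either case $a_j\,\Prob(Z_j=1\mid\mathcal{E}) \ge \delta\, a_j$ a.s. Summing over the finitely many $j$ (a finite union of null sets is null) yields
\[
  \sum_j \delta\, w_j \ind\{j\in U\} = \delta\,\mu(U).
\]

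The main subtlety, rather than a real obstacle, is interpreting the hypothesis ``for every $j \in U$'' when $U$ is itself random. I would formalize it as: for each fixed $j \in \Omega$, $\ind\{j \in U\}\bigl(\Prob(Z_j=1\mid\mathcal{E}) - \delta\bigr) \ge 0$ almost surely. This is legitimate precisely because $\{j\in U\}\in\mathcal{E}$, and it is exactly what Step 3 uses. Finiteness of $\Omega$ ensures the almost-sure qualifiers combine without measurability problems.

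I would close with a remark that soundness of the assessment ($P \cap V = \varnothing$, so $V \subseteq U$ by equation~(1)) is not needed for the inequality itself. It is only needed to interpret the left side as the conditional expected weight of violated obligations, $\E[\mu(V)\mid\mathcal{E}]$.
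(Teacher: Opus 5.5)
Your proof is correct and follows the paper's argument: conditional linearity of expectation, then the termwise bound $w_j\,\Prob(Z_j=1\mid\mathcal{E}) \ge \delta\,w_j$. Rewriting the sum over the fixed set $\Omega$ with the $\mathcal{E}$-measurable coefficients $w_j\ind\{j\in U\}$ and using the pull-out property just makes rigorous the handling of the random index set $U$, which the paper's one-line proof leaves implicit; your closing remark is also accurate, since soundness ($P\cap V=\varnothing$, hence $V\subseteq U$) is not used in the inequality and only serves to identify the left side with $\E[\mu(V)\mid\mathcal{E}]$.
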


\begin{proof}
Conditional linearity of expectation gives
$\sum_{j \in U} w_j \Prob(Z_j = 1 \mid \mathcal{E})$, and each summand is at
least $w_j \delta$. No independence assumption is needed.
\end{proof}

The probability is over a population of systems or explicitly modeled
uncertainty. The premise is a strong \emph{post-assessment} condition: an
unconditional average defect density does not establish it for the selectively
unverified set. It is not justified by a percentage of reported CVEs belonging
to one defect class. We do not assume such a lower bound in the coverage
analysis that follows. Neither a positive expectation nor a large $U$ proves a
violation exists in every instance.

Undecidability also supplies no missing density premise. A total sound and
complete decision procedure for all programs and an appropriate nontrivial
semantic property is impossible in the classical setting \cite{rice1953}.
Sound incomplete verification, decidable restricted languages, finite-state
checking, and proofs for specific programs remain possible. Assurance should
combine proof and empirical evidence where each is applicable.

\section{Sequential assessment and evidence quality}

\subsection{A decision process with explicit uncertainty}

For discrete state and observation spaces, a finite-horizon partially
observable model may be written
\begin{equation}
  \mathcal{M} = (\mathcal{X}, \mathcal{A}, \mathcal{O}, P, O, r, b_0, H),
\end{equation}
where $P(x' \mid x, a)$ is a transition kernel, $O(o \mid x', a)$ an
observation kernel, $r$ the task reward, and $b_0$ an initial belief. Time,
resource consumption, and relevant assessor state can be included in
$\mathcal{X}$. The policy is separate from the environment tuple. A belief
update, when its normalizer is positive, is
\begin{equation}
  b_{t+1}(x') =
  \frac{O(o_{t+1} \mid x', a_t) \sum_{x} P(x' \mid x, a_t) b_t(x)}
       {\sum_{y} O(o_{t+1} \mid y, a_t) \sum_{x} P(y \mid x, a_t) b_t(x)} .
\end{equation}
This is a modeling language, not a claim that an LLM implements exact Bayesian
filtering. A history-state MDP is another representation when the history
contains the information needed to predict future observations. Either
formulation must retain stochastic tool behavior and incomplete evidence when
they matter.

In the defensive scope considered here, actions include reviewing authorized
evidence, checking specified configuration invariants, and assessing
documented remediation results. Completion means a correct adjudicated
conclusion about an obligation. Intermediate activity measures can be useful
diagnostics, but optimizing them need not preserve the terminal objective.

\subsection{Acceptance is distinct from truth}

Let $Z \in \{0,1\}$ indicate whether a proposed conclusion is correct, and let
$A \in \{0,1\}$ indicate checker acceptance. Within a specified evaluation
population, define
\begin{equation}
  \pi = \Prob(Z = 1), \quad
  v = \Prob(A = 1 \mid Z = 1), \quad
  f = \Prob(A = 1 \mid Z = 0).
\end{equation}
For nonzero acceptance probability,
\begin{equation}
  \Prob(A = 1) = \pi v + (1-\pi) f, \qquad
  \Prob(Z = 1 \mid A = 1) = \frac{\pi v}{\pi v + (1-\pi) f} .
\end{equation}
Thus $\pi v$ counts correctly accepted proposals, not all accepted proposals.
A proof checker may establish soundness relative to a formal statement and
trusted implementation. A crash, differential discrepancy, or model judgment is
not automatically a sound certificate of an arbitrary security property.
Specification adequacy, environment fidelity, and implementation correctness
remain distinct obligations. Validation may also be expensive, especially for
distributed behavior and deployment effects.

With $\pi = 0.01$, $v = 0.90$, and $f = 0.01$, the positive predictive value is
$0.476190$. This analytic example shows why high sensitivity alone does not
make accepted reports reliable. Actual base rates and conditional error rates
must be measured on the evaluated population, including hard negative cases
and abstentions.

\begin{proposition}[Repeated checking error budget]
Suppose a campaign makes at most $N$ checking decisions. Let $F_i$ be the event
that decision $i$ accepts an incorrect conclusion, and let $\mathcal{H}_{i-1}$
be the preceding history. If
$\Prob(F_i \mid \mathcal{H}_{i-1}) \le \alpha_i$ almost surely, for fixed
nonnegative $\alpha_i$, then
\begin{equation}
  \Prob\!\left(\bigcup_{i=1}^{N} F_i\right) \le \sum_{i=1}^{N} \alpha_i .
\end{equation}
\end{proposition}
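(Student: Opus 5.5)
The plan is to reduce the statement to the ordinary union bound after converting the conditional hypotheses into unconditional ones. The history structure is only needed for this conversion. No independence between decisions, and no control of how the campaign adapts, is required.

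\textbf{Step 1 (interpretation).} I will fix a filtration $\mathcal{H}_0 \subseteq \mathcal{H}_1 \subseteq \cdots$ in which $\mathcal{H}_{i-1}$ records everything observed before decision $i$. When the campaign stops early, so that fewer than $N$ decisions occur, I set $F_i = \varnothing$ for the decisions that never happen. Then $\Prob(F_i \mid \mathcal{H}_{i-1}) = 0 \le \alpha_i$ still holds on that event, and the bound over all $N$ indices is unchanged. This padding device removes any need to handle a random number of decisions separately.

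\textbf{Step 2 (tower property).} For each $i$, I will write
\[
\Prob(F_i) = \E\bigl[\Prob(F_i \mid \mathcal{H}_{i-1})\bigr] \le \E[\alpha_i] = \alpha_i .
\]
This uses that the hypothesis holds almost surely and that $\alpha_i$ is a fixed constant, not a random quantity depending on the history.

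\textbf{Step 3 (countable subadditivity).} I will apply $\Prob\bigl(\bigcup_{i=1}^N F_i\bigr) \le \sum_{i=1}^N \Prob(F_i)$ and combine it with Step 2 to obtain the displayed inequality.

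\textbf{Main point of care.} I expect the only real obstacle to be conceptual rather than technical: the per-decision error thresholds must be deterministic, and the sequence must be indexed in advance. If the $\alpha_i$ were instead chosen adaptively, meaning they were $\mathcal{H}_{i-1}$-measurable, Step 2 would give $\E[\alpha_i]$ in place of $\alpha_i$. The conclusion would then hold only with $\sum_i \E[\alpha_i]$ on the right-hand side, or under a fixed pre-allocated spending schedule of the kind used in alpha-spending.

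I will close by remarking that the bound can be loose when failures are positively dependent, but that it never requires independence. This parallels the remark after the conditional residual expectation proposition.
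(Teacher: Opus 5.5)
Your proposal is correct and matches the paper's proof essentially step for step. Both use the tower property to turn each almost-sure conditional bound into $\Prob(F_i) \le \alpha_i$, then apply the finite union bound, and both assign empty error events to decisions that are never made.
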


\begin{proof}
Take expectations of each conditional bound and apply the union bound. No
independence is required. Decisions not made may be assigned empty error
events.
\end{proof}

The premise is stronger than a favorable average false-acceptance rate on a
static validation set. It must hold under the adaptive histories being
assessed. Where it cannot be defended, the bound is a specification target
rather than a measured guarantee. Repeating two checkers with shared failure
modes does not justify multiplying their error probabilities.

\subsection{Grounding and decision quality}

For a labeled fact $Z$ with true distribution $p^*$ and predicted distribution
$q$, the useful KL direction is $D_{\KL}(p^* \Vert q)$. If truth is the point
mass at $z^*$,
\begin{equation}
  D_{\KL}(\delta_{z^*} \Vert q) = -\log q(z^*).
\end{equation}
By contrast, $D_{\KL}(q \Vert \delta_{z^*})$ is infinite whenever $q$ assigns
positive mass to any false label. A difference of such infinities does not
define progress.

For a preregistered finite set of labeled facts, evaluate before-and-after
predictions $q_0, q_1$ using
\begin{equation}
  G = \frac{\E[\log_2 q_1(Z) - \log_2 q_0(Z)]}{\E[T_{\mathrm{ref}}]}, \qquad
  I = \frac{\E[u(d_1, Z) - u(d_0, Z)]}{\E[T_{\mathrm{ref}}]} .
\end{equation}
Here $T_{\mathrm{ref}} > 0$ is processed text under a fixed reference
tokenizer, and $u$ is a declared decision utility. Use positive-support
predictive distributions, or disclose a prespecified smoothing convention. If
separately scored binary facts are aggregated, report the sum of marginal
scores; it is not a joint likelihood without an appropriate joint model. A
bounded proper score, such as the Brier score, is an alternative when
logarithmic losses are unsuitable \cite{gneiting2007scoring}.

The two rates have different units and can be negative. They cannot generally
be multiplied, minimized together, or interpreted as intrinsic intelligence.
Under a true Bayesian joint model and exact prior and posterior predictions,
expected log-score gain is mutual information. For approximate predictions it
is simply a predictive-score difference. Identical evidence availability is not
identical evidence use; factorial comparisons are needed to separate model,
evidence-interface, and interaction effects.

Participation and correctness should likewise be separated. For an attempt
indicator $E$, the exact chain rule is
\begin{equation}
  \Prob(E = 1, Z = 1, A = 1)
  = \Prob(E = 1)\,\Prob(Z = 1 \mid E = 1)\,\Prob(A = 1 \mid E = 1, Z = 1).
\end{equation}
A change in participation does not establish an improvement in conditional
correctness. We therefore avoid a universal scalar that treats refusal, missing
information, competence, and acceptance as interchangeable forms of capability
suppression.

\section{Coverage under repeated assessment}

\subsection{Define the unit before fitting a curve}

Fix a finite evaluation universe $\mathcal{J}$ of distinct outcomes with
weights $w_j \ge 0$ and $W = \sum_j w_j > 0$. Let $Y_{ij} = 1$ mean that
assessment attempt $i$ correctly resolves outcome $j$ and that this resolution
meets the declared evidence standard. Set
\begin{equation}
  C_n = \frac{1}{W} \sum_{j \in \mathcal{J}} w_j\,
        \Prob\!\left(\bigcup_{i=1}^{n} \{Y_{ij} = 1\}\right).
\end{equation}
No independence between different outcomes is required for this identity. It
differs from the chance of finding at least one issue in an entire system and
from the number of accepted reports. It is directly measurable only when the
evaluation universe or an appropriate sampling frame is known. On open-ended
production work, report distinct adjudicated outcomes; do not call their count
a fraction of all unknown vulnerabilities.

If attempts are independent and identically distributed conditional on outcome
$j$, with $\Prob(Y_{ij} = 1 \mid j) = p_j$, then
\begin{equation}
  C_n = \frac{1}{W} \sum_{j} w_j \left[1 - (1-p_j)^n\right].
\end{equation}
The common-rate curve $1 - (1-p)^n = 1 - e^{-n\lambda}$ is exact when
$p_j = p$ for all weighted outcomes and $\lambda = -\log(1-p)$. In general,
$\lambda$ is not equal to the one-attempt coverage fraction; replacing it by
$p$ is a small-$p$ approximation.

Equivalently, draw a weighted random outcome and let its latent per-attempt
success probability be $\Theta \in [0,1]$. Additional run-level shared
conditions can be incorporated in $\Theta$ when conditional independence
remains defensible. Then
\begin{equation}
  C_n = 1 - \E[(1-\Theta)^n].
\end{equation}
This model describes fixed-procedure repetition. It is not a model of arbitrary
adaptive reassessment that changes its information or procedure after each
attempt.

\begin{proposition}[Coverage limit and diminishing increments]
Under equation~(14),
\begin{align}
  C_\infty := \lim_{n \to \infty} C_n &= 1 - \Prob(\Theta = 0), \\
  C_{n+1} - C_n &= \E[\Theta(1-\Theta)^n] \ge 0, \\
  (C_{n+2} - C_{n+1}) - (C_{n+1} - C_n)
    &= -\E[\Theta^2 (1-\Theta)^n] \le 0.
\end{align}
\end{proposition}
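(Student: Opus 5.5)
The plan is to work directly from the representation $C_n = 1 - \E[(1-\Theta)^n]$ in equation~(14). All three claims then reduce to pointwise statements about the function $g_n(\theta) = (1-\theta)^n$ on $[0,1]$, followed by taking expectations. Every quantity involved is bounded in $[0,1]$, so integrability is never an issue and linearity of expectation applies freely.

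\textbf{The limit.} Pointwise, $g_n(\theta) \to \ind\{\theta = 0\}$ as $n \to \infty$. For $\theta = 0$ we have $g_n(0) = 1$ for all $n$. For $\theta \in (0,1]$ we have $0 \le 1-\theta < 1$, so $(1-\theta)^n \to 0$; this includes $\theta = 1$, where the value is $0$ for every $n \ge 1$. Since $0 \le g_n \le 1$, the dominated convergence theorem gives
\[
\E[(1-\Theta)^n] \to \E[\ind\{\Theta = 0\}] = \Prob(\Theta = 0).
\]
Monotone convergence would also work, because $g_n$ is nonincreasing in $n$. This yields $C_\infty = 1 - \Prob(\Theta = 0)$. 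The limit exists because $C_n$ is monotone and bounded, which also follows from the next step.

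\textbf{The first difference.} By linearity,
\[
C_{n+1} - C_n = \E[(1-\Theta)^n - (1-\Theta)^{n+1}] = \E[(1-\Theta)^n\,\Theta].
\]
The integrand is nonnegative on $[0,1]$, so the difference is nonnegative.

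\textbf{The second difference.} Applying the same step at $n+1$ gives $C_{n+2} - C_{n+1} = \E[\Theta(1-\Theta)^{n+1}]$. Subtracting the first difference yields
\[
\E\bigl[\Theta(1-\Theta)^n\bigl((1-\Theta) - 1\bigr)\bigr] = -\E[\Theta^2(1-\Theta)^n],
\]
which is nonpositive because its integrand is nonnegative.

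The only step needing any care is the limit interchange. The hard cases there are the boundary values: at $\Theta = 1$, note that the value $0^0$ is never needed for $n \ge 1$, and at $\Theta = 0$ the mass is retained for every $n$. The bounded-convergence argument above handles both. The remaining steps are algebraic identities.
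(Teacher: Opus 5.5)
Your proposal is correct and follows essentially the same route as the paper's proof. The paper also obtains the limit from pointwise convergence of the bounded integrand $(1-\Theta)^n$ to $\ind\{\Theta=0\}$ via bounded convergence, and gets both difference identities by subtracting consecutive instances of equation~(14); your explicit handling of the boundary values $\Theta \in \{0,1\}$ just spells out details the paper leaves implicit.
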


\begin{proof}
The bounded integrand $(1-\Theta)^n$ converges pointwise to
$\ind\{\Theta = 0\}$. Bounded convergence gives equation~(15). Subtracting
consecutive instances of equation~(14) gives the other identities.
\end{proof}

A strict support ceiling therefore concerns positive probability of \emph{zero}
success under the fixed procedure. Very small positive success probabilities
can create practical budget or deadline barriers while leaving
$C_\infty = 1$. These are different claims.

For $n \ge 2$, $(1-x)^n$ is convex on $[0,1]$. Jensen's inequality consequently
gives
\begin{equation}
  C_n \le 1 - (1-q)^n, \qquad q = \E[\Theta].
\end{equation}
Replacing heterogeneous difficulty by its mean is optimistic for repeated task
coverage, even when all attempts are independent conditional on their task.

\subsection{Why an effective sample size is not a coverage law}

Let $X_1, \dots, X_n$ be Bernoulli outcomes for a randomly drawn task or shared
condition, with common mean $q \in (0,1)$ and common pairwise correlation $c$.
Their sample mean satisfies
\begin{equation}
  \Var(\overline{X}_n) = \frac{q(1-q)}{n}\left[1 + (n-1)c\right], \qquad
  n_{\eff} = \frac{n}{1 + (n-1)c} .
\end{equation}
This is a variance identity. It compares precision of mean estimation with
independent sampling. It does not determine
$\Prob(X_1 = \cdots = X_n = 0)$, which depends on higher-order joint
probabilities. In particular,
\begin{equation}
  \widetilde{C}_n = 1 - (1-q)^{n_{\eff}}
\end{equation}
is an additional response-curve assumption, not a consequence of
equation~(19). Its limit cannot be advertised as a theorem about all positively
correlated assessments. Negative correlations are possible for finite
collections subject to feasibility constraints; we focus below on
$0 < c < 1$, where the claimed ceiling already fails.

\begin{proposition}[Equal pairwise correlation, different coverage limits]
For any $q, c \in (0,1)$ there are two infinitely extendible, conditionally iid
Bernoulli models with mean $q$ and pairwise correlation $c$, one having
coverage limit one and the other having coverage limit strictly below one.
\end{proposition}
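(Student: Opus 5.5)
The plan is to work entirely with mixtures of iid Bernoulli sequences. Given a latent $\Theta$, let $X_1, X_2, \dots$ be iid Bernoulli$(\Theta)$. Such a model is infinitely extendible and conditionally iid by construction. Its moments are determined by the first two moments of $\Theta$:
\begin{equation*}
  \E[X_i] = \E[\Theta], \qquad
  \Prob(X_i = X_k = 1) = \E[\Theta^2] \quad (i \ne k).
\end{equation*}
The common pairwise correlation is therefore
\begin{equation*}
  \frac{\E[\Theta^2] - q^2}{q(1-q)} = \frac{\Var(\Theta)}{q(1-q)} .
\end{equation*}
So it suffices to exhibit two laws on $[0,1]$ with mean $q$ and variance $c\,q(1-q)$: one with $\Prob(\Theta = 0) = 0$ and one with $\Prob(\Theta = 0) > 0$. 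The coverage limits then follow directly from equation~(15) of the preceding proposition, namely $C_\infty = 1 - \Prob(\Theta = 0)$.

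\textbf{First model: Beta mixing.} Take $\Theta \sim \mathrm{Beta}(a,b)$ with $a = q(1-c)/c$ and $b = (1-q)(1-c)/c$. Both parameters are positive for $q, c \in (0,1)$. The standard Beta formulas give mean $a/(a+b) = q$ and variance $q(1-q)/(a+b+1)$. Since $a + b + 1 = 1/c$, the variance equals $c\,q(1-q)$ as required. A Beta law is absolutely continuous, so $\Prob(\Theta = 0) = 0$ and $C_\infty = 1$. This is the P\'olya-urn/beta-binomial model.

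\textbf{Second model: an atom at zero.} Let $\Theta = 0$ with probability $1 - m$ and $\Theta = t$ with probability $m$, where $t \in (0,1]$. The mean condition $mt = q$ gives $m = q/t$. The variance is
\begin{equation*}
  m t^2 - q^2 = q t - q^2 .
\end{equation*}
Setting this equal to $c\,q(1-q)$ yields $t = q + c(1-q)$. This value lies strictly between $q$ and $1$ because $0 < c < 1$. Consequently $m = q/t \in (0,1)$, and $\Prob(\Theta = 0) = 1 - m > 0$. The coverage limit is $C_\infty = m = q/(q + c(1-q)) < 1$.

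The only step needing care is the feasibility check in the second model. We must ensure $t \le 1$ and $m < 1$, which is where $c < 1$ and $c > 0$ are each used. The rest is routine moment calculation. I would close by noting the consequence: both models share $q$ and $c$, hence the same $n_{\eff}$, yet their limits are $1$ and $q/(q + c(1-q))$. Pairwise correlation therefore cannot determine the support ceiling.
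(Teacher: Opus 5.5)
Your proof is correct and follows the paper's argument essentially exactly: the same mixture identity $\Var(\Theta)/(q(1-q))$ for the pairwise correlation, the same $\mathrm{Beta}\bigl(q(c^{-1}-1),(1-q)(c^{-1}-1)\bigr)$ law, and the same two-point law on $\{0,\,q+c(1-q)\}$. The only difference is cosmetic. The paper writes out the finite-$n$ curves $1-\mathrm{B}(a,b+n)/\mathrm{B}(a,b)$ and $(q/r)[1-(1-r)^n]$ before taking limits, whereas you read the limits directly from $C_\infty = 1-\Prob(\Theta=0)$. Your explicit check that $t\in(q,1)$, and hence that the two-point law is feasible, is a useful addition.
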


\begin{figure}[t]
  \centering
  \includegraphics[width=\textwidth]{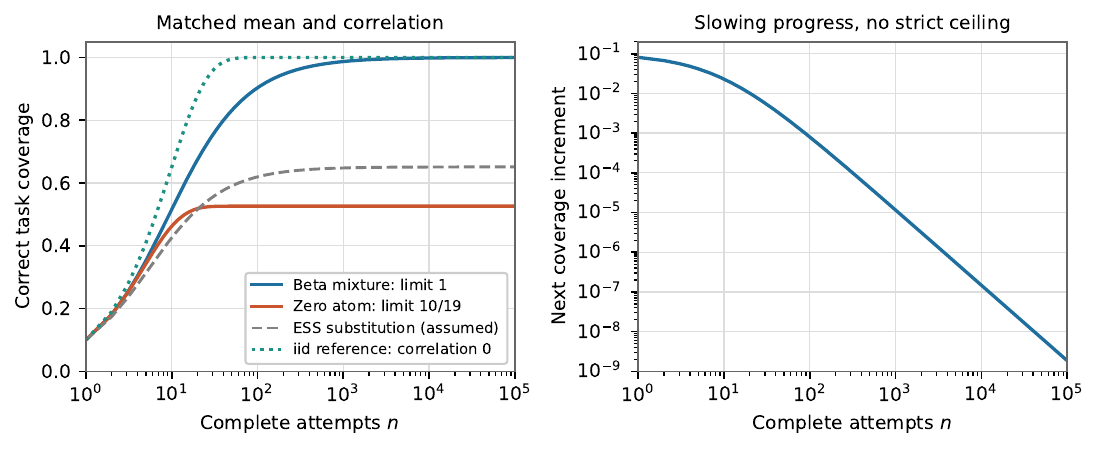}
  \caption{Analytic coverage curves. Left: the beta and zero-atom models share
  $q = c = 0.1$ but have different limits; the effective-sample-size
  substitution matches neither. The iid reference has the same mean and zero
  correlation. Right: the diminishing increments of the beta model remain
  positive despite its limit being one. Values are evaluations of the displayed
  formulas, not model runs.}
  \label{fig:coverage}
\end{figure}

\begin{proof}
In a mixture with $X_i \mid \Theta \iidsim \mathrm{Bernoulli}(\Theta)$,
\begin{equation}
  \E[X_i] = \E[\Theta], \qquad
  \Corr(X_i, X_j) = \frac{\Var(\Theta)}{q(1-q)}, \quad i \ne j.
\end{equation}
For the first model take $\Theta \sim \mathrm{Beta}(a,b)$ with
\begin{equation}
  a = q(c^{-1} - 1), \qquad b = (1-q)(c^{-1} - 1).
\end{equation}
It has mean $q$, variance $cq(1-q)$, and no atom at zero. Its coverage is
\begin{equation}
  C_n^{\mathrm{beta}} = 1 - \frac{\mathrm{B}(a, b+n)}{\mathrm{B}(a,b)}
  \longrightarrow 1 .
\end{equation}
For the second model set $r = q + c(1-q)$ and let $\Theta = r$ with probability
$q/r$, and $\Theta = 0$ otherwise. Its mean is $q$ and its variance is
$q(r-q) = cq(1-q)$. But
\begin{equation}
  C_n^{\mathrm{atom}} = \frac{q}{r}\left[1 - (1-r)^n\right]
  \longrightarrow \frac{q}{r} < 1 .
\end{equation}
Thus both means and all pairwise correlations match, while the limiting
coverage differs.
\end{proof}

For $q = c = 0.1$, the models are $\mathrm{Beta}(0.9, 8.1)$ and a mixture
supported on $\{0, 0.19\}$. Their limits are respectively $1$ and
$10/19 \approx 0.526316$. The substituted curve in equation~(20) predicts
$1 - 0.9^{10} \approx 0.651322$, which is neither limit. Figure~1 plots the
exact curves. It also makes clear that diminishing increments occur both with
and without a strict ceiling.

\subsection{What two moments can establish}

Pairwise information can bound coverage without identifying it. With
$S_n = \sum_i X_i$, Cauchy--Schwarz gives
$(\E S_n)^2 \le \Prob(S_n > 0)\E[S_n^2]$. Substituting the equicorrelated
second moment yields
\begin{equation}
  \frac{nq}{1 + (n-1)\left[q + c(1-q)\right]} \le \Prob(S_n > 0)
  \le \min\{1, nq\} .
\end{equation}
The upper bound is the union bound. These inequalities hold whenever the
stipulated Bernoulli joint distribution exists and the denominator is positive;
they do not require a latent mixture.

Within the conditional-iid mixture model, applying Cauchy--Schwarz to
$\Theta\ind\{\Theta > 0\}$ also gives
\begin{equation}
  \frac{q}{q + c(1-q)} \le C_\infty \le 1 .
\end{equation}
The two constructions in proposition~4 attain these endpoints. This is the
sharp interval for the limiting coverage obtainable from these two moments
within that model class. It is an interval of possible limits, not a predicted
ceiling.

\subsection{Finite observations cannot certify inaccessible support}

\begin{proposition}[A finite-budget indistinguishability bound]
Consider a latent success distribution with mass $\pi_0$ at zero. Replace that
mass by an atom at $\varepsilon \in (0,1]$, leaving the remaining distribution
unchanged. Let $P_n$ and $Q_n$ be the respective laws of the full vector of $n$
conditionally iid Bernoulli outcomes. Then their total variation distance
satisfies
\begin{equation}
  \lVert P_n - Q_n \rVert_{\TV}
  \le \pi_0\left[1 - (1-\varepsilon)^n\right] \le \pi_0 n \varepsilon .
\end{equation}
Nevertheless, moving this atom raises limiting coverage by $\pi_0$.
\end{proposition}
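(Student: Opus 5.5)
The plan is to write both laws as mixtures over the latent success probability and then use a coupling, or equivalently the convexity of total variation under mixing. Decompose the latent law as $\pi_0\delta_0 + (1-\pi_0)\nu$, where $\nu$ is the remaining distribution; if $\pi_0 = 1$ there is no remainder and the argument below still applies. Let $B_\theta^{\otimes n}$ denote the law of $n$ iid $\mathrm{Bernoulli}(\theta)$ outcomes. Then
\begin{equation*}
  P_n = \pi_0 B_0^{\otimes n} + (1-\pi_0)\int B_\theta^{\otimes n}\,\nu(d\theta),
  \qquad
  Q_n = \pi_0 B_\varepsilon^{\otimes n} + (1-\pi_0)\int B_\theta^{\otimes n}\,\nu(d\theta).
\end{equation*}
The common component cancels in $P_n - Q_n$, so $\lVert P_n - Q_n\rVert_{\TV} = \pi_0 \lVert B_0^{\otimes n} - B_\varepsilon^{\otimes n}\rVert_{\TV}$. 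I use the convention $\sup_A |P(A)-Q(A)|$; it is exact here because the shared part cancels identically.

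The remaining step is the total variation distance between the point mass at the all-zero vector and the $n$-fold product $\mathrm{Bernoulli}(\varepsilon)$ law. Since $B_0^{\otimes n}$ is a point mass at $\mathbf{0}$, the distance equals $1 - B_\varepsilon^{\otimes n}(\{\mathbf{0}\}) = 1-(1-\varepsilon)^n$. To see this, take $A = \{\mathbf{0}\}$ for the lower bound. For the upper bound, note that any event either contains $\mathbf{0}$, in which case $P(A)-Q(A) \le 1-Q(\mathbf 0)$, or does not, in which case $|P(A)-Q(A)| = Q(A) \le 1-Q(\mathbf 0)$. This gives the first inequality, in fact with equality. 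The second inequality $1-(1-\varepsilon)^n \le n\varepsilon$ is Bernoulli's inequality, or equivalently the union bound over the $n$ events that a coordinate equals one.

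For the final claim I invoke Proposition~3, which says $C_\infty = 1 - \Prob(\Theta = 0)$. Under the original law, $\Prob(\Theta=0) = \pi_0 + (1-\pi_0)\nu(\{0\})$. Here "the remaining distribution" has to be read as carrying no further mass at zero, i.e.\ $\pi_0$ is the full atom at zero. Under that reading $\nu(\{0\})=0$, so $C_\infty^P = 1-\pi_0$. After the move, the atom sits at $\varepsilon>0$, so $C_\infty^Q = 1$, and the limiting coverage rises by exactly $\pi_0$.

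I expect no step to be a genuine obstacle. The only points needing care are the bookkeeping convention just described and stating which TV normalization is used. If one uses the $L^1$ normalization instead, the distance picks up a factor of two, so I will fix the $\sup_A$ convention at the start.
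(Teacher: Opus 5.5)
Your proof is correct, but it takes a different route from the paper. The paper couples the mixture-component draw and uses identical Bernoulli draws off the moved atom. On the moved atom, the two outcome vectors differ only if some $\mathrm{Bernoulli}(\varepsilon)$ draw equals one, and the coupling inequality then gives the upper bound $\pi_0[1-(1-\varepsilon)^n]$. You instead observe that $P_n - Q_n = \pi_0\bigl(B_0^{\otimes n} - B_\varepsilon^{\otimes n}\bigr)$ as signed measures, and then compute the distance from a point mass to a product law directly.

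Your route buys sharpness. The first inequality is actually an equality, which shows that the paper's coupling is maximal and the bound cannot be improved. The coupling buys economy, since no distance has to be evaluated. It also states the paper's interpretive point directly: the two models can be told apart only through a success on the moved atom.

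Fixing the $\sup_A$ normalization explicitly is a worthwhile addition, which the paper leaves implicit. Under the $L^1$ convention the distance is twice the stated right-hand side, so the inequality would fail.

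One minor remark: your restriction $\nu(\{0\}) = 0$ is not needed for the final claim. Moving mass $\pi_0$ from $0$ to $\varepsilon > 0$ lowers $\Prob(\Theta = 0)$ by exactly $\pi_0$ whatever the remaining distribution is. Proposition~3 therefore gives an increase of $\pi_0$ under either reading.
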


\begin{proof}
Couple the mixture-component draw. Off the moved atom, use identical Bernoulli
draws. On it, one model always returns zero; the other differs only if at least
one $\mathrm{Bernoulli}(\varepsilon)$ draw is one. This bounds the probability
that the coupled vectors differ, hence bounds total variation. The last
inequality is the union bound. The limiting statement follows from
proposition~3.
\end{proof}

For any fixed number of observations, the distributions can therefore be made
arbitrarily close while their asymptotic coverage differs substantially.
Finite-budget data can support a practical coverage limit over the observed
budget range. Identifying exact zero-probability mass requires additional
structural or parametric assumptions, which must be reported rather than hidden
in an asymptote fit.

\subsection{Adaptive attempts and coverage selection}

For one outcome, let $h_i$ be the success probability on attempt $i$
conditional on all earlier attempts failing. The chain rule gives
\begin{equation}
  C_n = 1 - \prod_{i=1}^{n} (1 - h_i).
\end{equation}
This identity does not assume independent attempts. If $h_i \ge \epsilon > 0$
for all $i$, coverage tends to one. More generally, a divergent sum of $h_i$ is
sufficient, since $\log(1-h_i) \le -h_i$. Independence is therefore not
necessary for eventual success. The failure-conditioned probabilities must be
defined on histories of positive probability; after certain success, coverage
is already one.

For a fixed candidate set $\mathcal{A}$ and jointly defined random outcome sets
$B_a$, the function
\begin{equation}
  F(S) = \E\!\left[\sum_j w_j \ind\{j \in \cup_{a \in S} B_a\}\right],
  \qquad S \subseteq \mathcal{A},
\end{equation}
is monotone submodular. For each realization, adding a candidate to a larger
union can only reduce its new contribution; taking expectations preserves this
inequality. The classical $1 - 1/e$ greedy guarantee applies to normalized
monotone submodular maximization under a cardinality constraint and access to
the required marginal values \cite{nemhauser1978}. It does not automatically
transfer to noisy estimated marginals, unequal costs, or candidates whose
output distribution changes with preceding actions.

For example, suppose a uniformly random label $J \in \{1,2\}$ selects the sole
valid conclusion. An information-only assessment reveals $J$ but resolves no
credited outcome by itself; a subsequent assessment specialized to label $1$
resolves it with prior probability $1/2$. After observing $J = 1$, that
assessment's conditional marginal benefit becomes one. Information can increase
later marginal value. Ordinary submodularity of a fixed union does not exclude
this complementarity; adaptive guarantees require their own conditions
\cite{golovin2011adaptive}.

Geometric diversity is also not an outcome-correlation measurement. A valid DPP
requires a positive-semidefinite kernel and an appropriate normalization
\cite{kulesza2012dpp}. It supplies no general guarantee about the coverage
objective in equation~(12). Embedding similarity, duplicate-report frequency,
and Bernoulli outcome correlation should therefore be measured as separate
quantities.

\section{Substitution is conditional on cost, support, and time}

For independent attempts on one fixed task with correctly adjudicated success
probability $p \in (0,1)$, the minimum attempt count for target success
$s \in (0,1)$ is
\begin{equation}
  n^*(s,p) = \left\lceil \frac{\log(1-s)}{\log(1-p)} \right\rceil .
\end{equation}
At $p = 0$, positive success is impossible; at $p = 1$, one attempt suffices.
The familiar approximation $n^* \approx -\log(1-s)/p$ requires small $p$ and
ignores integer rounding.

For $s = 0.95$, $p = 0.1$ requires $29$ attempts, and $p = 0.02$ requires
$149$. The attempt ratio is $149/29 \approx 5.138$. This calculation does not
determine the compute ratio. If a model configuration $m$ has shared cost $b_m$
and per-complete-attempt cost $d_m$, then
\begin{equation}
  B_m(s) = b_m + d_m n^*(s, p_m).
\end{equation}
The cost includes failed attempts and required adjudication. It can be
expressed separately in FLOPs, dollars, or energy with appropriately measured
coefficients. If the lower-success system costs one tenth as much per attempt
and shared costs vanish, its cost ratio in this example is $0.514$, despite
using more attempts. If its attempts cost $1.6$ times as much, the ratio is
$8.221$. A universal super-linear compute premium does not follow from a
probability gap. It requires an explicit family of success and cost functions
and a defined asymptotic variable.

Even in an assumed effective-sample-size response curve, physical compute would
be charged against the raw number of attempts, not $n_{\eff}$. Statistical
discounting does not refund the cost of redundant work. This distinction is
essential whenever reuse or dependence is included in an economic model.

For a homogeneous supported fraction $g$ of tasks, a separate assumed model
gives $C_n = g[1 - (1-p)^n]$. A target $s < g$ then requires
\begin{equation}
  n \ge \left\lceil \frac{\log(1 - s/g)}{\log(1-p)} \right\rceil .
\end{equation}
Targets $s > g$ are impossible; when $0 < p < 1$, $s = g > 0$ is approached but
never reached at finite $n$. Heterogeneous supported tasks require
equation~(14) instead of this two-parameter simplification.

A deadline adds another constraint. In the special case of $m$ identical
available slots, deterministic complete-attempt duration $t > 0$, no startup
overhead, and no other bottleneck, at most $m\lfloor D/t \rfloor$ attempts
finish by time $D$. Shared preprocessing, variable durations, or limited
checking capacity can lower this number. Real substitution claims must
therefore compare the same task distribution and adjudication standard under
explicit budget and deadline constraints. They do not establish parity of
general intelligence or performance on unobserved task classes.

\section{Budgets, service capacity, and delivered outcomes}

\subsection{An outcome objective with explicit constraints}

Let $J_D$ be the distinct correctly adjudicated outcomes completed by deadline
$D$, and define $V_D = \sum_{j \in J_D} w_j$. A useful design problem is
\begin{equation}
  \max_{\pi \in \Pi} \E_\pi[V_D] \quad \text{subject to} \quad
  B_\pi \le B_0, \quad
  \Prob_\pi(\text{any false acceptance}) \le \delta,
\end{equation}
with the budget constraint imposed pathwise, or explicitly replaced by an
expected-budget constraint when that is the intended model. Here $\Pi$ is the
set of authorized assessment policies; feasibility also includes operational
and evidence-access restrictions. In practice the error constraint may be an
audited target rather than a certified bound.

Coverage per FLOP is a useful descriptive efficiency measure at a stated
quality and deadline. Maximizing a ratio alone can favor inadequate total
benefit. For example, $(1 - e^{-ax})/x$ decreases with $x > 0$ for $a > 0$,
since its derivative has numerator $e^{-ax}(ax + 1) - 1 < 0$. A ratio objective
would prefer arbitrarily little work if no quality or service requirement were
imposed. Fixed-budget outcome maximization, cost minimization at a required
outcome level, and net benefit maximization are different problems and should
be named accordingly.

FLOPs, elapsed time, dollars, and joules are also different resource measures.
More efficient hardware can reduce elapsed time or energy for the same
arithmetic; it does not change the count of FLOPs already specified by that
arithmetic. It can indirectly change which procedures fit a deadline. Report a
resource vector rather than hiding unknown quantities in one nominal compute
value.

\subsection{Allocation conditions and their limits}

For an illustrative differentiable response $F(g, \ell, n)$ with shared
preprocessing compute $g$, depth $\ell$, breadth $n$, and decode cost
coefficient $\phi > 0$, consider the relaxation
\begin{equation}
  g + n\ell\phi \le B_0 .
\end{equation}
At a regular interior local optimum with a binding budget and no other active
constraints, first-order stationarity gives
\begin{equation}
  F_g = \frac{F_\ell}{n\phi} = \frac{F_n}{\ell\phi} = \mu .
\end{equation}
These equations are marginal conditions, not an algorithm or a guarantee of
global optimality. Breadth is ordinarily integer-valued, the budget expression
is bilinear, and the response need not be concave. If dependence, accuracy, or
service costs vary with allocation, their derivatives belong in $F$ or the
resource function. Boundary optima require complementary slackness. General
KKT principles come from constrained optimization \cite{boyd2004convex}; no
training scaling law establishes the required response surface here.
Appendix~B gives a restricted concave model where a global solution is
justified.

\subsection{Capacity and useful throughput}

Let an admitted assurance case visit stage $i$ an expected $\nu_i > 0$ times,
consume mean service time $s_i > 0$ per visit, and have $m_i > 0$ dedicated
parallel servers there. Flow conservation in a stationary open pipeline
requires $a\nu_i s_i \le m_i$. A design condition with positive utilization
headroom is
\begin{equation}
  a\,\nu_i s_i < m_i \quad \text{for each } i, \qquad
  a < \min_i \frac{m_i}{\nu_i s_i},
\end{equation}
where $a$ is the admitted case-arrival rate. Strict inequality is required for
positive recurrence in an M/M/1 queue; it is not a universal necessity for
every deterministic system. This counts cases, not successful findings. General
networks can require additional stability conditions; shared resources,
synchronization, nonstationary demand, and scheduling must be modeled
separately. Capacities can increase with investment, so a fixed-pipeline
bottleneck is not a universal asymptotic bound.

If each completed case produces at most one unit-weight outcome, a measurable
goodput identity is
\begin{equation}
  Q = \lambda_{\mathrm{complete}}
      \Prob(\text{distinct, correct, relevant, and timely} \mid
            \text{completed}).
\end{equation}
The joint conditional probability can be decomposed into a chain of conditional
factors. Multiplying separately estimated marginal fractions generally gives a
different answer. For weighted or multiple-output cases, use the expected
distinct timely value per completed case instead.

In a synthetic three-stage M/M/1 tandem with Poisson input, independent
exponential service times, service rates $100$, $30$, and $10$ cases/s, and
unit visit counts, stable case throughput has supremum $10$ cases/s. If
$80\%$ of completed cases are distinct and $50\%$ of those are correctly
adjudicated on time, goodput has supremum $4$ outcomes/s. These yields are
stipulated constants for the illustration; actual yield can depend on load. The
limiting input rate of $10$ cases/s is not stable in this model.

If a case-arrival process is Poisson and each case independently produces a
qualifying result with constant probability $p$, thinning gives a Poisson
result process with rate $ap$. For a finite outcome universe, however, the
fraction of results that are \emph{new} typically decreases over time. Neither
a stationary Poisson novelty process nor novelty fraction $= 1 - c$ follows
from pairwise outcome correlation. A $1/(\mu - a)$ mean residence time is
specific to a stable M/M/1 queue; it is not a general pipeline formula.

For a fixed service-work graph of total work $W_s$ and critical-path duration
$L_s$, $m$ identical processors obey $T \ge \max\{W_s/m, L_s\}$. Likewise,
speeding up a fraction $f_s$ of a fixed serial baseline by a factor $\kappa$
gives the ideal Amdahl expression $1/(1 - f_s + f_s/\kappa)$. At $f_s = 0.2$
and $\kappa = 10$ it is $1.219512$. These bounds identify missing service
contributions; they do not establish which stage dominates a particular
organization.

\subsection{Serving and cache accounting}

For conventional dense transformer key/value storage, one sequence requires
approximately
\begin{equation}
  S_{\KV} = 2LNH_{\KV}db \quad \text{bytes},
\end{equation}
where $L$ is layer count, $N$ cached tokens, $H_{\KV}$ the number of key/value
heads, $d$ head dimension, and $b$ bytes per stored element.
Architecture-specific compression, sharding, and metadata alter this estimate.
For $L = 64$, $N = 4096$, $H_{\KV} = 8$, $d = 128$, and $b = 2$, it is exactly
one GiB. At $50$ GiB/s, moving these bytes alone takes $20$ ms.

Cache reuse can avoid repeated prefill computation, but its net gain includes
transfer, conversion, synchronization, eviction, and queueing. If measured
throughput already includes reuse, applying a second reuse multiplier
double-counts the gain. A hit ratio alone does not determine throughput
improvement. Reuse requires compatible weights, model configuration, positional
encoding, precision and cache semantics, as well as tenant and data-access
isolation. Moving between unrelated hosted models is not a shared-forward-pass
operation.

Prefill is often compute-intensive and low-batch decoding often
bandwidth-intensive, but these are workload regimes, not immutable phase laws.
Batch size, sequence length, model architecture, precision, and communication
can change the bottleneck. A phase-disaggregated design should be compared with
a colocated baseline under the same quality and tail-latency requirements
\cite{zhong2024distserve,qin2024mooncake}. No particular accelerator family is
implied by the probability model.

\section{From verified outcomes to defensive protection}

\subsection{A response race with a mitigation interval}

Let $T_H$ be time from a declared episode origin to harmful impact, $T_D$ time
to usable detection, and $M \ge 0$ time from detection to effective mitigation.
Prevention is the event $T_D + M < T_H$. If $T_H$ is independent of
$(T_D, M)$ and $S_H(t) = \Prob(T_H > t)$, then
\begin{equation}
  \Prob(\text{prevent}) = \E[S_H(T_D + M)].
\end{equation}
Without independence, the correct expression uses the conditional survival
probability $\Prob(T_H > T_D + M \mid T_D, M)$. Detection is not prevention,
and evidence acceptance is not effective deployment.

\begin{proposition}[Independent exponential clocks]
Suppose $T_H \sim \mathrm{Exp}(\alpha)$ and $T_D \sim \mathrm{Exp}(\gamma)$ are
independent, with $\alpha, \gamma > 0$. For deterministic mitigation time
$m \ge 0$,
\begin{equation}
  \Prob(\text{prevent}) = \frac{\gamma}{\alpha + \gamma}\,e^{-\alpha m}.
\end{equation}
If mitigation time is random and independent of both clocks, replace
$e^{-\alpha m}$ by $\E[e^{-\alpha M}]$.
\end{proposition}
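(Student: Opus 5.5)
The plan is to start from the general identity in equation~(36), $\Prob(\text{prevent}) = \E[S_H(T_D + M)]$. It applies because $T_H$ is independent of $(T_D, M)$. For the deterministic case this holds with $M = m$, and in the random case $M$ is assumed independent of both clocks. With $T_H \sim \mathrm{Exp}(\alpha)$ the survival function is $S_H(t) = e^{-\alpha t}$ for $t \ge 0$. The prevention probability therefore becomes the expectation
\[
  \E\!\left[e^{-\alpha (T_D + M)}\right].
\]

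The key step is to factor this expectation. Since $e^{-\alpha(T_D+M)} = e^{-\alpha T_D} e^{-\alpha M}$, and $T_D$ and $M$ are independent, the expectation splits as $\E[e^{-\alpha T_D}]\,\E[e^{-\alpha M}]$. In the deterministic case the second factor is simply $e^{-\alpha m}$. The first factor is the Laplace transform of $\mathrm{Exp}(\gamma)$ at $\alpha$:
\[
  \int_0^\infty \gamma e^{-\gamma t} e^{-\alpha t}\,dt = \frac{\gamma}{\alpha+\gamma}.
\]
This integral converges because $\alpha + \gamma > 0$. Combining the two factors gives the displayed formula, and the random-mitigation version follows by the same factorization. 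All integrands are nonnegative and bounded by one, so Tonelli/Fubini justifies every interchange.

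The only step needing care is the conditioning that reduces the problem to equation~(36). Strictly, that equation requires $T_H$ to be independent of the pair $(T_D, M)$, not merely pairwise independent of each. I read ``independent of both clocks'' as mutual independence of $T_H$, $T_D$, and $M$, which supplies both facts needed: $T_H \perp (T_D, M)$, used to condition on $(T_D, M)$, and $T_D \perp M$, used to factor the expectation.

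As a sanity check, setting $m = 0$ recovers the classical race probability $\Prob(T_D < T_H) = \gamma/(\alpha+\gamma)$. The events $\{T_D = T_H\}$ and $\{T_D + M = T_H\}$ have probability zero because $T_H$ is continuous and independent of $T_D + M$. Hence the choice of strict inequality in the definition of prevention does not affect the result.
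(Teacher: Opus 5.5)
Your proof is correct and is essentially the paper's: the paper integrates the $\mathrm{Exp}(\gamma)$ detection density against the survival factor $e^{-\alpha(t+m)}$ for fixed $m$ and then conditions on the independent $M$. That is the same computation as your factorization $\E[e^{-\alpha T_D}]\,\E[e^{-\alpha M}]$, carried out in the other order. Your point that ``independent of both clocks'' must mean mutual independence of $T_H$, $T_D$, $M$ makes explicit what the paper's one-line conditioning step assumes implicitly. One minor slip: the identity $\Prob(\text{prevent}) = \E[S_H(T_D+M)]$ is equation~(39), not (36).
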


\begin{proof}
For fixed $m$, integrate the detection density against survival through
mitigation:
\[
  \int_0^{\infty} \gamma e^{-\gamma t} e^{-\alpha(t+m)}\,dt
  = \frac{\gamma e^{-\alpha m}}{\alpha + \gamma}.
\]
Conditioning on independent $M$ gives the second claim.
\end{proof}

\begin{figure}[t]
  \centering
  \includegraphics[width=\textwidth]{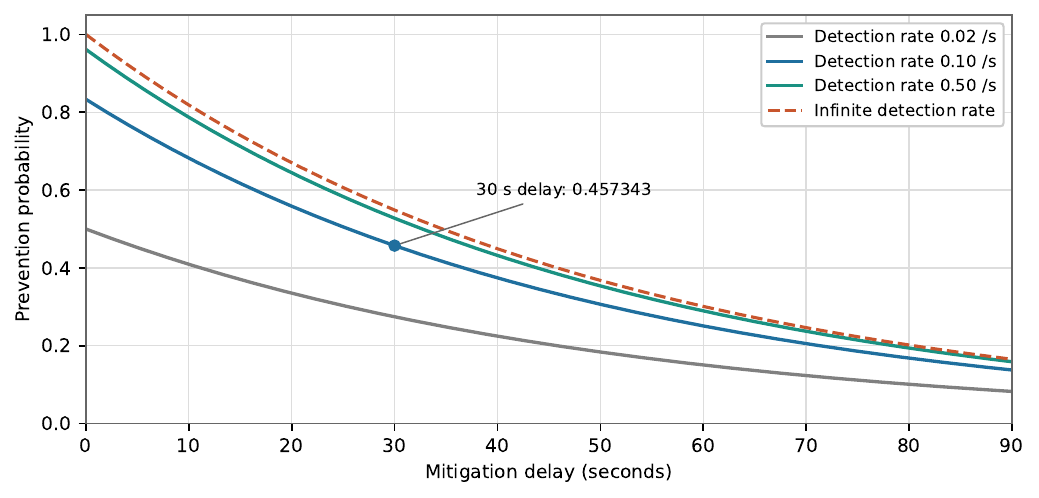}
  \caption{Analytic effect of mitigation delay under independent exponential
  clocks. The curves assume $\alpha = 0.02$ s$^{-1}$ and the displayed
  detection rates. The dashed curve is the infinite-detection-rate limit
  $e^{-\alpha m}$. The marked point gives $0.457343$ at
  $\gamma = 0.10$ s$^{-1}$ and $m = 30$ s.}
  \label{fig:prevention}
\end{figure}

For a finite completion deadline $D \ge m$, the probability of prevention
\emph{and} completed mitigation by $D$ is
\begin{equation}
  \frac{\gamma e^{-\alpha m}}{\alpha + \gamma}
  \left[1 - e^{-(\alpha + \gamma)(D - m)}\right],
\end{equation}
and is zero for $D < m$. These formulas assume effective mitigation succeeds
when completed and omit intervention-caused harm. A failed or partial
intervention requires an extended outcome model.

At $\alpha = 0.02$ s$^{-1}$ and $\gamma = 0.10$ s$^{-1}$, prevention is
$0.833333$ for zero delay and $0.457343$ for $m = 30$ s. As
$\gamma \to \infty$ with $m$ fixed, equation~(40) tends to $e^{-\alpha m}$.
This is a conditional limit on benefits from detection acceleration; it is not
a bound on an adversary's compute advantage. Changing impact hazards,
observability, mitigation effectiveness, or deployment delay changes the
result. Figure~2 illustrates the model without treating its assumed rates as
measurements.

\subsection{Structural controls require a risk model}

Let $j$ index mutually exclusive scenarios with occurrence probabilities
$\pi_j$, losses $\ell_j \ge 0$, and residual harmful-outcome probabilities
$r_j(K) \in [0,1]$ under a control portfolio $K$. Then
\begin{equation}
  L(K) = \sum_j \pi_j \ell_j r_j(K) + L_{\mathrm{intervention}}(K)
\end{equation}
is a stated expected-loss model. Scenario overlap requires joint-event
accounting, a valid partition, or an explicitly labeled bound; summing
overlapping route weights does not generally produce a probability. A
deterministic control with a discharged assurance argument can set an
applicable $r_j$ to zero. A proposed control, incomplete patch, or unsupported
claim cannot be assigned that effect by definition.

Removing weighted risk mass does not by itself multiply per-attempt resolution
probabilities or event hazards by the same fraction. Removing the hardest cases
can increase success among surviving tasks; attention can also shift to the
remaining cases. Coverage normalized against a reduced universe can increase
while absolute harmful opportunity decreases. Consequently there is no general
identity $\lambda(K) = \sigma_K \lambda(\varnothing)$ from a residual-risk
retention fraction $\sigma_K$ alone.

For a horizon $[0, \tau]$, the policy's risk-reduction rate is a causal
estimand,
\begin{equation}
  R(\tau) = \frac{\E[L^0_{[0,\tau]}] - \E[L^1_{[0,\tau]}]}{\tau},
\end{equation}
where both potential outcomes use a common population and horizon, and
policy-induced costs and harm are included. Distinct finding counts do not
identify this difference. Randomized or otherwise justified causal evaluation
is needed; a simulation must be labeled as such.

\subsection{No general equilibrium follows}

For independent exponential discovery clocks, a defender discovers first with
probability $\lambda_D / (\lambda_D + \lambda_A)$. This is a race identity, not
a strategic equilibrium, and it omits post-discovery mitigation. A
game-theoretic equilibrium would additionally require strategy sets,
information, payoff functions, and mutual best responses.

Defenders often have control and observability options unavailable to
outsiders. Attackers may have different information, objectives, timing
choices, and acceptable failure rates. These strategy sets are not naturally
nested. Structural controls can be valuable without implying that defenders
possess a strict superset of all adversary actions, or that a favorable
discovery probability ensures system-wide security.

For the same reason, the Gordon--Loeb investment bound is not a general upper
bound on rational offensive spending \cite{gordon2002economics}. Diminishing
returns alone do not imply a universal $1/e$ fraction. Defensive portfolio
choices should use an explicit expected-loss and cost model, including boundary
solutions and intervention effects.

\section{A conceptual architecture for defensive assurance}

The architecture follows the distinction between information, evidence, and
authority. It is a design specification for authorized assessment and
remediation review, not an implemented or benchmarked system. Figure~3 shows
its trust boundaries. It deliberately does not assume that an embedding score
controls statistical correlation or that a model-generated verdict is sound.

\paragraph{Evidence and assessment.}
Evidence records identify the system version, property, environment, source,
collection time, and permitted use. Observations, hypotheses, and proved claims
remain distinct. Missing observations and conflicting sources remain explicit.
Repository text, retrieved documents, telemetry, and model output are untrusted
data and cannot revise the assessment's authority or scope. The assessor works
within fixed resource and evidence-access limits and can return an unresolved
result.

\begin{figure}[t]
  \centering
  \includegraphics[width=\textwidth]{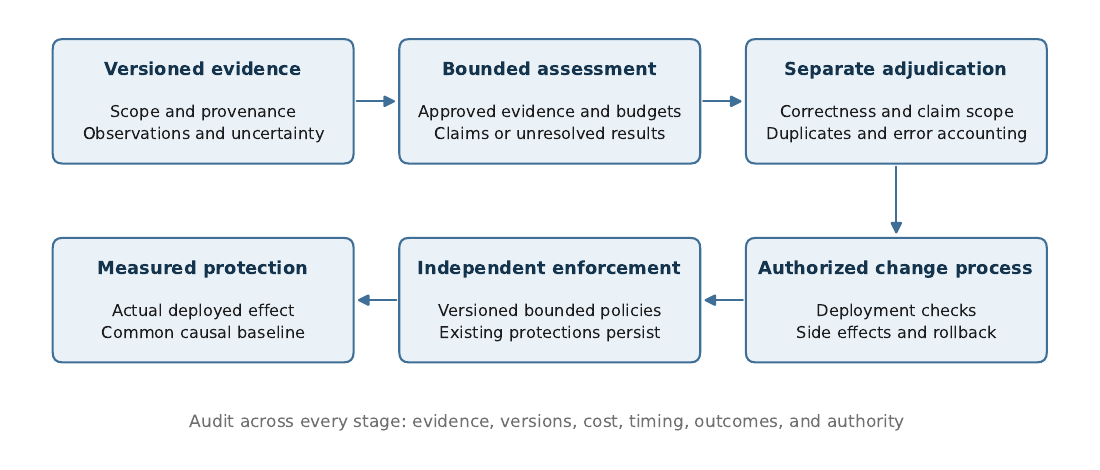}
  \caption{Conceptual defensive architecture. Versioned evidence feeds bounded
  assessment, whose claims undergo separate adjudication. Recommendations reach
  an authorized change process; deployment remains subject to independently
  defined policy. Audit and measurement span the stages. Arrows carry data or
  reviewed recommendations and do not confer authority. No particular model,
  accelerator, or automated search policy is assumed.}
  \label{fig:architecture}
\end{figure}

\paragraph{Adjudication.}
An assessment result states the claim, its scope, supporting evidence, and
limitations. Adjudication distinguishes a formal proof under stated
assumptions, a reproducible bounded observation, a probabilistic judgment, and
an unresolved claim. Duplicate outcomes are reconciled by a declared
equivalence relation with retained evidence, rather than deleted solely because
their wording is similar. Error rates are evaluated on negative as well as
positive cases. A separate reviewer or checker is not presumed statistically
independent merely because it is a separate process.

\paragraph{Authorized remediation and enforcement.}
Adjudicated findings inform an authorized change process. The change record
links the proposed control to affected obligations, expected side effects,
deployment verification, and rollback conditions. Model output does not grant
permission to act. Existing protective controls remain active when inference or
evidence services fail. Time-critical enforcement relies on independently
specified bounded policies; neither fast hardware nor deterministic scheduling
makes a probabilistic model a verified enforcement mechanism. NIST's
adversarial-ML taxonomy supplies relevant integrity and interaction threats for
evaluating this separation \cite{vassilev2025adversarialml}.

\paragraph{Serving and isolation.}
The serving layer records model and runtime versions, complete resource use,
compatibility constraints, and queueing behavior. Cache identity includes
semantic compatibility and access-control scope. Assessment services use
isolated resources where required, and test evidence does not silently become a
production claim. If separate prefill and decoding resources are used, their
benefit must survive the accounting in section~7; a simpler colocated baseline
is required for comparison.

\paragraph{Measurement and change control.}
The audit record links each unique outcome to its evidence, adjudication,
timestamps, resource charges, and any resulting deployment. A model, prompt,
evidence interface, or serving change creates a new evaluated configuration.
Improvements are judged by correctly adjudicated outcomes and deadline
performance, with deployment impact measured separately. Persistent memory is
versioned and invalidated when its premises change. Learned similarity and
internal-model monitoring may inform diagnostics, but neither supplies a proof
of coverage, correctness, or authorization.

\begin{table}[t]
  \centering
  \caption{Architecture claims and the evidence needed to support them.}
  \label{tab:architecture}
  \small
  \begin{tabular}{L{1.55in}L{1.85in}L{2.30in}}
    \toprule
    Proposed benefit & Required comparison &
      Failure that defeats the claim \\
    \midrule
    Better evidence improves assessment &
      Same model, task, and budget; held-out fact and decision scores &
      More context with no gain in correctness, or increased unsupported
      claims \\
    Adjudication improves reliability &
      Positive and negative cases; errors by claim class &
      Shared checker failures or a low positive predictive value \\
    Serving changes improve throughput &
      Same outcome quality and deadline; complete stage costs &
      Transfer, queueing, or review costs erase the gain \\
    Control deployment improves protection &
      Common counterfactual, actual deployment, intervention costs &
      No reduction in expected loss or unacceptable side effects \\
    \bottomrule
  \end{tabular}
\end{table}

\section{Evaluation and reproducibility}

\subsection{Task construction and estimands}

An evaluation should use three complementary settings. First, finite synthetic
obligation sets with exact ground truth permit checking calibration, duplicate
accounting, and abstention, including instances with no violation. Second,
versioned authorized assurance tasks permit specification review, configuration
reasoning, and historical remediation assessment. Third, sanitized operational
replay or shadow-mode recommendations permit measuring timeliness and reviewer
burden. None of these settings is a direct estimate of protection in an
unobserved deployment.

Before running comparisons, declare the task population, system and model
versions, outcome equivalence relation, weights, admissible evidence, complete
budget, deadlines, and primary estimand. Keep tuning and test projects
separate, include time-held-out tasks, and document possible training
contamination. Reference answers used for ground truth must not enter
assessment context. Human reviewers should be blinded to configuration labels
where practical, and disagreements should be retained with their resolution
method.

Define the counted outcome precisely. Correctly establishing that a bounded
obligation is satisfied can be valuable, but it is not a discovered defect.
Report those classes separately before applying any declared common utility.
Unsupported claims, failed runs, timeouts, unavailable evidence, and
abstentions remain visible. For a prespecified intention-to-assess success
endpoint, they remain in its denominator.

\subsection{Coverage and dependence estimation}

Record a task-by-replicate outcome matrix for each frozen configuration. Report
within-task repeated-attempt success separately from variation in difficulty
across tasks. A shared model or prompt does not establish dependence: iid
sampling from a fixed conditional distribution remains iid. Conversely, pooling
heterogeneous tasks can produce positive marginal correlation even when their
attempts are conditionally independent.

Estimate coverage directly over the finite observed budget grid, rather than
inferring it from embeddings. When testing an equicorrelation or latent-mixture
model, compare predicted higher-budget coverage with held-out observations and
report residuals. Pairwise outcome correlations, trajectory similarity,
root-cause overlap, and the fraction of new outcomes are distinct estimands. A
binary correlation is undefined when one variable has zero variance, so
all-success and all-failure strata require separate treatment.

Fit multiple plausible response families where data support them, including
heterogeneous conditional-iid and explicit zero-mass mixtures. Report
sensitivity of extrapolated limits to the chosen family. Proposition~5 means
that a confidence interval for a fitted asymptote is conditional on that
family, not a nonparametric certificate of inaccessible support. An observed
plateau may also reflect censoring, an insufficient budget range, task
heterogeneity, a checker bottleneck, or a change in the evaluated procedure.

\subsection{Quality, resource, and time reporting}

Report at least: correctly adjudicated distinct outcomes by deadline; false
acceptance, missed conclusions, and abstention by task class; grounding score
and decision utility; completed and abandoned cases; median and tail latencies;
and complete resource use. Resource accounts include shared preprocessing,
repeated inference, tool work, adjudication, retries, failed runs, and transfer
overhead. Report human review time separately rather than translating it into
fictitious FLOPs. Undisclosed model compute and reasoning-token counts are
unknown, not zero. Serving price and energy measurements need a measurement
date and method.

Under a fixed task distribution, verified weighted goodput is
\begin{equation}
  Q(\tau) = \frac{1}{\tau}\,
  \E\!\left[\sum_{j \in \mathcal{J}_{\mathrm{distinct}}}
    w_j Y_j \ind\{t_j \le D_j,\; t_j \le \tau\}\right],
\end{equation}
where $Y_j$ indicates an independently adjudicated correct conclusion, $t_j$
its completion time, and $D_j$ its deadline. This quantity is descriptive
productivity. It becomes operational risk reduction only with additional causal
evidence supporting equation~(43).

\subsection{Paired comparisons and noninferiority}

Let $P_A$ and $P_B$ be task-level success probabilities of reference and
candidate configurations under a specified budget and deadline, and let
$\Delta = P_B - P_A$. A noninferiority analysis prespecifies margin
$\epsilon > 0$ and tests $H_0 : \Delta \le -\epsilon$. It supports
noninferiority at level $\alpha$ only when a valid one-sided lower
$(1-\alpha)$ confidence bound exceeds $-\epsilon$. This is a statistical
decision with controlled error under the method's assumptions, not proof that
the true difference satisfies the bound.

Equivalence requires rejecting both $\Delta \le -\epsilon$ and
$\Delta \ge \epsilon$. With compatible one-sided tests at level $\alpha$, a
two-sided $(1 - 2\alpha)$ interval strictly inside $(-\epsilon, \epsilon)$ is
the corresponding criterion. Failure to detect a difference is not equivalence.
Choose margins for practical significance and plan sample size or power before
evaluating systems.

Use paired outcomes on shared tasks and preserve project-level dependence in
uncertainty estimates. A project-cluster bootstrap requires enough independent
projects and is not a cure for very few clusters; use a justified alternative or
report the limitation. Preregister a primary budget and endpoint, or adjust for
multiple comparisons and repeated looks. Equal-dollar, equal-energy, and
equal-deadline comparisons answer different questions; report each frontier
separately. No named-model parity claim is supported by this paper.

\begin{table}[t]
  \centering
  \caption{Empirical questions and observations that would count against a
  proposed benefit.}
  \label{tab:empirical}
  \small
  \begin{tabular}{L{0.55in}L{2.45in}L{2.55in}}
    \toprule
    ID & Question & Evidence against the proposed benefit \\
    \midrule
    E1 & Does additional budget improve correct coverage in the feasible
         range? &
         No held-out improvement after complete cost and deadlines are
         included \\
    E2 & Does better evidence improve factual and decision quality? &
         Unchanged or degraded proper scores and task utility \\
    E3 & Can a cheaper configuration meet the reference service requirement? &
         Failure of the preregistered noninferiority test at feasible
         budgets \\
    E4 & Does serving specialization improve delivered outcomes? &
         Component speedup is erased by queueing, transfer, or
         adjudication \\
    E5 & Does a control reduce operational loss? &
         No identifiable loss reduction after deployment and intervention
         costs \\
    \bottomrule
  \end{tabular}
\end{table}

\subsection{Falsifiable empirical questions}

Table~2 identifies claims that would justify expanding this theoretical
framework. These are proposed studies, not results.

\section{Limitations and implications}

The coverage results are exact within their stated models. Conditional
independence, the latent success distribution, and a fixed evaluation universe
are modeling assumptions, not observations about all security work. Adaptive
procedures can violate the diminishing-increment form, while open-ended
production settings lack a known coverage denominator. The finite-budget
indistinguishability result limits support inference even when the mixture
model is appropriate.

The evidence model separates truth and acceptance but does not solve
specification errors or guarantee independent adjudication. Proper scores need
reliable labels and a defined prediction space. Queueing bounds require a
compatible workload model; the exponential response example assumes independent
clocks, constant hazards, and effective mitigation. Causal risk reduction is
harder to identify than productivity, especially under changing threats and
coupled interventions.

The architecture is a conceptual proposal. No implementation, calibrated
population parameters, hardware comparison, model-performance study, or
deployment trial is reported. The arithmetic and proof checks in the companion
material validate the displayed formulas and examples; they do not validate the
empirical adequacy of the assumptions. The statistical identities should not be
presented as novel probability theory, and the framework's practical value
remains an empirical question.

The defensive implication is nevertheless useful: invest in evidence quality,
soundly scoped verification, effective controls, and timely response according
to their measured contribution. Additional compute is one input to that
decision. A reduction in actual harmful opportunity can be more valuable than a
higher assessment rate, but no universal ranking follows without costs,
effectiveness, and the evaluated loss model.

\section{Conclusion}

Compute-bounded assurance should be evaluated through correctly adjudicated,
distinct, timely outcomes under complete resource constraints. Pairwise outcome
correlation determines a variance adjustment, not a universal coverage ceiling.
Under conditional-iid repetition, the true asymptotic ceiling is determined by
zero-probability support, while finite budgets cannot generally distinguish
that support from extremely rare success. Verification error, service capacity,
and mitigation delay then determine how much assessment activity becomes useful
evidence and protection. These distinctions provide a defensible basis for
empirical comparisons without claiming inevitable vulnerability, universal
model substitution, or a general security equilibrium.

\section*{Research transparency}

The numerical illustrations are deterministic evaluations of stated formulas
using synthetic parameter choices. No software vulnerability was reproduced, no
live system was assessed, and no operational performance data were collected
for this paper. The companion source package includes the calculation and
figure-generation scripts. An AI assistant was used in mathematical checking,
literature checking, manuscript revision, and preparation of the
reproducibility material. This assistance is not independent peer review.

\appendix

\section{Notation}

\begingroup
\small
\setlength{\LTcapwidth}{\textwidth}
\begin{longtable}{L{1.65in}L{4.40in}}
  \caption{Symbols and their local scopes.}
  \label{tab:notation}\\
  \toprule
  Symbol & Meaning and units \\
  \midrule
  \endfirsthead
  \toprule
  Symbol & Meaning and units \\
  \midrule
  \endhead
  \bottomrule
  \endfoot
  $\Omega, U, P, V$ &
    Finite obligation universe, unproved obligations, proved-satisfied
    obligations, and violated obligations \\
  $w_j, \mu$ &
    Nonnegative obligation weights and their additive measure; a risk
    interpretation requires extra assumptions \\
  $\mathcal{E}, Z_j$ &
    Assessment evidence and indicator that obligation $j$ is violated \\
  $\mathcal{X}, \mathcal{A}, \mathcal{O}$ &
    Latent state, admissible action, and observation spaces \\
  $P(x' \mid x, a), O(o \mid x', a)$ &
    Transition and observation kernels; distinct from the proof-covered set
    $P$ \\
  $b_t, H$ &
    Belief at step $t$ and finite decision horizon \\
  $Z, A, E$ &
    Proposal correctness, checker acceptance, and participation indicators \\
  $\pi, v, f$ &
    Proposal base rate, checker sensitivity, and false-acceptance probability;
    policy $\pi$ is distinguished by context \\
  $G, I, T_{\mathrm{ref}}$ &
    Grounding log-score gain per reference token, decision-utility gain per
    reference token, and processed reference-token count \\
  $\mathcal{J}, Y_{ij}, W$ &
    Distinct evaluation outcomes, correct resolution indicator, and total
    outcome weight \\
  $n, C_n$ &
    Completed assessment attempts and weighted fraction correctly resolved at
    least once \\
  $\Theta, q, c$ &
    Latent conditional success probability, its mean, and pairwise Bernoulli
    outcome correlation \\
  $a, b$ in equation~(23) &
    Beta-distribution shape parameters; the arrival rate $a$ below has a
    separate local scope \\
  $\pi_0, r, n_{\eff}$ &
    Zero-success mass, nonzero atom in the counterexample, and
    variance-equivalent sample size \\
  $p, s, n^*, g$ in section~6 &
    Per-attempt success, target success, required integer attempts, and
    supported task fraction \\
  $B_m, b_m, d_m$ &
    Total, shared, and per-attempt cost in one consistently chosen resource
    unit \\
  $B_0, D, V_D$ &
    Budget limit, completion deadline, and distinct correctly adjudicated value
    completed by the deadline \\
  $g, \ell, n, \phi, \mu$ in equation~(35) &
    Preprocessing FLOPs, depth, breadth, FLOPs per depth unit, and budget
    shadow price \\
  $a, m_i, \nu_i, s_i$ in equation~(36) &
    Cases/s, servers, expected visits per case, and seconds per visit \\
  $Q, \lambda_{\mathrm{complete}}$ &
    Correct distinct timely outcomes/s, and completed cases/s \\
  $S_{\KV}, L, N, H_{\KV}, d, b$ &
    KV bytes, layers, cached tokens, KV heads, head dimension, and bytes per
    element \\
  $T_H, T_D, M$ &
    Time to harmful impact, time to usable detection, and mitigation duration,
    all in seconds \\
  $\alpha, \gamma, m$ in equation~(40) &
    Impact and detection rates in s$^{-1}$, and deterministic mitigation
    seconds \\
  $L(K), R(\tau)$ &
    Expected loss under controls $K$, and causal loss reduction per unit
    time \\
  $\Delta, \epsilon, \alpha$ in evaluation &
    Candidate-minus-reference success, noninferiority margin, and test size \\
\end{longtable}
\endgroup

\section{A restricted concave allocation model}

Consider nonoverlapping assurance-benefit classes with allocated budgets
$x_i \ge 0$. Assume their expected benefits are
\begin{equation}
  F_i(x_i) = A_i(1 - e^{-\kappa_i x_i}), \qquad
  A_i \ge 0,\; \kappa_i > 0, \qquad
  \sum_i x_i \le B .
\end{equation}
Here $A_i$ is benefit, $x_i$ and $B$ are in the same resource unit, and
$\kappa_i$ is its inverse. The benefit sum is concave; it is strictly concave
on coordinates with $A_i > 0$. If $B > 0$ and some $A_i > 0$, the optimum uses
all budget on positive-benefit classes. KKT conditions are sufficient, and
funded coordinates satisfy $A_i \kappa_i e^{-\kappa_i x_i} = \eta$ for a common
$\eta > 0$. Thus
\begin{equation}
  x_i^* =
  \begin{cases}
    \kappa_i^{-1}\left[\log(A_i \kappa_i / \eta)\right]_+, & A_i > 0, \\
    0, & A_i = 0,
  \end{cases}
  \qquad \sum_i x_i^* = B .
\end{equation}
The sum is continuous and strictly decreasing in $\eta$ whenever positive, from
infinity at zero to zero at $\max_i A_i \kappa_i$. This determines a unique
multiplier for $B > 0$. If $B = 0$, all allocations vanish. If all $A_i = 0$,
every feasible allocation is optimal and zero spending is a natural choice; the
logarithmic formula need not be evaluated.

The result uses separability, concavity, and the specified exponential
benefits. It is not the same problem as a bilinear depth--breadth budget.
Shared outcomes, deadlines, checker capacity, uncertain intervention effects,
or overlapping causal benefit require a different optimization model. The
general mathematical conditions are standard \cite{boyd2004convex}.

\section{Analytic checks and numerical examples}

Table~4 gives values used in the manuscript. The companion script checks
endpoint cases, exact rational moments of the discrete mixture, integer
minimality of the attempt counts, finite coverage bounds, and numerical
agreement between the response integral and its closed form. It also verifies
the allocation conditions on a synthetic portfolio and includes
counterexamples to the residual-growth and universal-investment claims. These
checks supplement the proofs; they are not formal machine-checked verification
of the whole manuscript.

\begin{table}[ht]
  \centering
  \caption{Reproducible analytic examples. All inputs are synthetic.}
  \label{tab:examples}
  \small
  \begin{tabular}{L{2.95in}L{3.05in}}
    \toprule
    Inputs & Result \\
    \midrule
    $s = 0.95$, $p = 0.1$ or $0.02$ &
      29 or 149 complete attempts \\
    $q = c = 0.1$ &
      Beta-mixture limit 1; zero-atom limit $10/19 = 0.526316$ \\
    $q = c = 0.1$ in ESS substitution &
      Assumed limit $1 - 0.9^{10} = 0.651322$ \\
    $\pi = 0.01$, $v = 0.90$, $f = 0.01$ &
      Accepted-report precision $0.476190$ \\
    M/M/1 tandem rates 100, 30, 10 cases/s &
      Stable admitted rate below 10 cases/s; goodput below 4/s at the stated
      yield \\
    $\alpha = 0.02$, $\gamma = 0.10$, $m = 30$ &
      Prevention probability $0.457343$ \\
    64 layers, 4096 tokens, 8 KV heads, $d = 128$, 2 bytes/element &
      $2^{30}$ bytes; 20 ms at 50 GiB/s, excluding overhead \\
    $f_s = 0.2$, $\kappa = 10$ &
      Ideal Amdahl speedup $1.219512$ \\
    \bottomrule
  \end{tabular}
\end{table}

\paragraph{Diminishing returns do not imply a universal investment fraction.}
For a purely mathematical counterexample, let benefit $f(x)$ for $x \ge 0$
equal $2x$ up to $x = 0.45$, and $1 - 0.1e^{-20(x - 0.45)}$ thereafter. It is
continuously differentiable, nondecreasing and concave, starts at zero, and is
bounded above by one. Net benefit $f(x) - x$ is maximized at
$x^* = 0.45 + \log(2)/20 \approx 0.484657 > 1/e$. Thus bounded concave benefit
alone cannot prove a $1/e$ expenditure bound. A specific economic theorem
requires its own assumptions.

\paragraph{A dependence check at two attempts.}
For any common-mean Bernoulli pair with correlation $c$, direct
inclusion--exclusion gives $C_2 = 2q - q^2 - cq(1-q)$. The beta and zero-atom
constructions have exactly this value. Matching $C_1$ and $C_2$ therefore does
not distinguish their very different large-budget behavior. The companion
material includes the complete figure values so that numerical stability and
plotted scales can be inspected independently.


\begin{thebibliography}{99}
\small

\bibitem{boyd2004convex}
Stephen Boyd and Lieven Vandenberghe.
\newblock \emph{Convex Optimization}.
\newblock Cambridge University Press, 2004.
\newblock \url{https://web.stanford.edu/~boyd/cvxbook/}.

\bibitem{brown2024monkeys}
Bradley Brown, Jordan Juravsky, Ryan Ehrlich, Ronald Clark, Quoc V. Le,
  Christopher R\'{e}, and Azalia Mirhoseini.
\newblock Large language monkeys: Scaling inference compute with repeated
  sampling.
\newblock arXiv:2407.21787, 2024.
\newblock \url{https://arxiv.org/abs/2407.21787}.

\bibitem{darpa2025aixcc}
DARPA.
\newblock AI cyber challenge marks pivotal inflection point for cyber defense.
\newblock DARPA competition results, 2025.
\newblock \url{https://www.darpa.mil/news/2025/aixcc-results}.

\bibitem{gneiting2007scoring}
Tilmann Gneiting and Adrian E. Raftery.
\newblock Strictly proper scoring rules, prediction, and estimation.
\newblock \emph{Journal of the American Statistical Association},
  102(477):359--378, 2007.
\newblock \url{https://doi.org/10.1198/016214506000001437}.

\bibitem{golovin2011adaptive}
Daniel Golovin and Andreas Krause.
\newblock Adaptive submodularity: Theory and applications in active learning
  and stochastic optimization.
\newblock \emph{Journal of Artificial Intelligence Research}, 42:427--486,
  2011.
\newblock \url{https://arxiv.org/abs/1003.3967}.

\bibitem{gordon2002economics}
Lawrence A. Gordon and Martin P. Loeb.
\newblock The economics of information security investment.
\newblock \emph{ACM Transactions on Information and System Security},
  5(4):438--457, 2002.
\newblock \url{https://doi.org/10.1145/581271.581274}.

\bibitem{kulesza2012dpp}
Alex Kulesza and Ben Taskar.
\newblock Determinantal point processes for machine learning.
\newblock \emph{Foundations and Trends in Machine Learning}, 5(2--3):123--286,
  2012.
\newblock \url{https://arxiv.org/abs/1207.6083}.

\bibitem{kwon2023pagedattention}
Woosuk Kwon et al.
\newblock Efficient memory management for large language model serving with
  PagedAttention.
\newblock In \emph{Proceedings of the 29th Symposium on Operating Systems
  Principles}, 2023.
\newblock \url{https://arxiv.org/abs/2309.06180}.

\bibitem{nemhauser1978}
George L. Nemhauser, Laurence A. Wolsey, and Marshall L. Fisher.
\newblock An analysis of approximations for maximizing submodular set
  functions---I.
\newblock \emph{Mathematical Programming}, 14:265--294, 1978.
\newblock \url{https://doi.org/10.1007/BF01588971}.

\bibitem{qin2024mooncake}
Ruoyu Qin et al.
\newblock Mooncake: A KVCache-centric disaggregated architecture for LLM
  serving.
\newblock arXiv:2407.00079, 2024.
\newblock \url{https://arxiv.org/abs/2407.00079}.

\bibitem{rice1953}
H. G. Rice.
\newblock Classes of recursively enumerable sets and their decision problems.
\newblock \emph{Transactions of the American Mathematical Society},
  74(2):358--366, 1953.
\newblock \url{https://doi.org/10.1090/S0002-9947-1953-0053041-6}.

\bibitem{sel4verification}
seL4 Foundation.
\newblock Verification.
\newblock \url{https://sel4.systems/Verification/}, 2026.
\newblock Living documentation, accessed September 7, 2026.

\bibitem{sel4assumptions}
seL4 Foundation.
\newblock What the proofs assume.
\newblock \url{https://sel4.systems/Verification/assumptions.html}, 2026.
\newblock Living documentation, accessed September 7, 2026.

\bibitem{singhal2011attackgraphs}
Anoop Singhal and Xinming Ou.
\newblock Security risk analysis of enterprise networks using probabilistic
  attack graphs.
\newblock Technical Report NIST IR 7788, National Institute of Standards and
  Technology, 2011.
\newblock \url{https://csrc.nist.gov/pubs/ir/7788/final}.

\bibitem{snell2024testtime}
Charlie Snell, Jaehoon Lee, Kelvin Xu, and Aviral Kumar.
\newblock Scaling LLM test-time compute optimally can be more effective than
  scaling model parameters.
\newblock arXiv:2408.03314, 2024.
\newblock \url{https://arxiv.org/abs/2408.03314}.

\bibitem{vanderstoep2024memorysafety}
Jeff Vander Stoep and Alex Rebert.
\newblock Eliminating memory safety vulnerabilities at the source.
\newblock Google Security Blog, September 2024.
\newblock
  \url{https://security.googleblog.com/2024/09/eliminating-memory-safety-vulnerabilities-Android.html}.

\bibitem{vassilev2025adversarialml}
Apostol Vassilev et al.
\newblock Adversarial machine learning: A taxonomy and terminology of attacks
  and mitigations.
\newblock Technical Report NIST AI 100-2e2025, National Institute of Standards
  and Technology, 2025.
\newblock \url{https://doi.org/10.6028/NIST.AI.100-2e2025}.

\bibitem{wu2024inference}
Yangzhen Wu, Zhiqing Sun, Shanda Li, Sean Welleck, and Yiming Yang.
\newblock Inference scaling laws: An empirical analysis of compute-optimal
  inference for problem-solving with language models.
\newblock arXiv:2408.00724, 2024.
\newblock \url{https://arxiv.org/abs/2408.00724}.

\bibitem{zheng2023sglang}
Lianmin Zheng et al.
\newblock SGLang: Efficient execution of structured language model programs.
\newblock arXiv:2312.07104, 2023.
\newblock \url{https://arxiv.org/abs/2312.07104}.

\bibitem{zhong2024distserve}
Yinmin Zhong et al.
\newblock DistServe: Disaggregating prefill and decoding for goodput-optimized
  large language model serving.
\newblock arXiv:2401.09670, 2024.
\newblock \url{https://arxiv.org/abs/2401.09670}.

\end{thebibliography}
\end{document}